\title{On Peak versus Average Interference Power Constraints for Protecting Primary Users in Cognitive Radio Networks \footnote{Submitted to
IEEE Transactions on Wireless Communications, June 1 2008, revised
September 1 2008.}}
\author{Rui Zhang \footnote{Rui Zhang is with the Institute for Infocomm Research, A*STAR, Singapore.
(e-mail: rzhang@i2r.a-star.edu.sg)}}
\begin{document}
\maketitle \maketitle \thispagestyle{empty}

\begin{abstract}
This paper considers spectrum sharing for wireless communication
between a cognitive radio (CR) link and a primary radio (PR) link.
It is assumed that the CR protects the PR transmission by applying
the so-called {\it interference-temperature} constraint, whereby the
CR is allowed to transmit regardless of the PR's on/off status
provided that the resultant interference power level at the PR
receiver is kept below some predefined threshold. For the fading PR
and CR channels, the interference-power constraint at the PR
receiver is usually one of the following two types: One is to
regulate the {\it average} interference power (AIP) over all the
fading states, while the other is to limit the {\it peak}
interference power (PIP) at each fading state. From the CR's
perspective, given the same average and peak power threshold, the
AIP constraint is more favorable than the PIP counterpart because of
its more flexibility for dynamically allocating transmit powers over
the fading states. On the contrary, from the perspective of
protecting the PR, the more restrictive PIP constraint appears at a
first glance to be a better option than the AIP. Some surprisingly,
this paper shows that in terms of various forms of capacity limits
achievable for the PR fading channel, e.g., the ergodic and outage
capacities, the AIP constraint is also superior over the PIP. This
result is based upon an interesting {\it interference diversity}
phenomenon, i.e., randomized interference powers over the fading
states in the AIP case are more advantageous over deterministic ones
in the PIP case for minimizing the resultant PR capacity losses.
Therefore, the AIP constraint results in larger fading channel
capacities than the PIP for both the CR and PR transmissions.
\end{abstract}

\begin{keywords}
Cognitive radio, spectrum sharing, interference temperature,
interference diversity, fading channel capacity.
\end{keywords}

\setlength{\baselineskip}{1.3\baselineskip}
\newtheorem{remark}{\underline{Remark}}[section]
\newtheorem{theorem}{\underline{Theorem}}[section]
\newtheorem{lemma}{\underline{Lemma}}[section]
\newcommand{\mv}[1]{\mbox{\boldmath{$ #1 $}}}

\section{Introduction}

This paper is concerned with a typical spectrum sharing scenario for
wireless communication, where a secondary radio, also commonly known
as the {\it cognitive radio} (CR), communicates over the same
bandwidth that has been allocated to an existing primary radio (PR).
For such a scenario, the CR usually needs to deal with a fundamental
tradeoff between maximizing its own transmission throughput and
minimizing the amount of interference caused to the PR transmission.
There are in general three types of methods known in the literature
for the CR to deal with such a tradeoff. One is the so-called {\it
opportunistic spectrum access} (OSA), originally outlined in
\cite{Mitola00} and later formally introduced by DARPA, whereby the
CR decides to transmit over the PR spectrum only when the PR
transmission is detected to be off; while the other two methods
allow the CR to transmit over the spectrum simultaneously with the
PR. One of them is based on the ``cognitive relay'' idea
\cite{Tarokh06}, \cite{Viswanath06}. For this method, the CR
transmitter is assumed to know perfectly all the channels from CR/PR
transmitter to PR and CR receivers and, furthermore, the PR's
message prior to the PR transmission. Thereby, the CR transmitter is
able to send messages to its own receiver and, at the same time,
compensate for the resultant interference to the PR receiver by
operating as an assisting relay to the PR transmission. In contrast,
the other method only requires that the power gain of the channel
from CR transmitter to PR receiver is known to the CR transmitter
and, thereby, the CR is allowed to transmit regardless of the PR's
on/off status provided that the resultant interference power level
at the PR receiver is kept below some predefined threshold, also
known as the {\it interference-temperature} constraint
\cite{Haykin05}, \cite{Gastpar07}. In this paper, we focus our study
on this method due to its many advantages from an implementation
viewpoint.

To enable wireless spectrum sharing under the
interference-temperature constraint, {\it dynamic resource
allocation} (DRA) for the CR becomes crucial, whereby the transmit
power level, bit-rate, bandwidth, and antenna beam of the CR are
dynamically changed based upon the channel state information (CSI)
available at the CR transmitter. For the single-antenna fading PR
and CR channels, transmit power control for the CR has been studied
in \cite{Ghasemi07}, \cite{Aissa07} under the average/peak
interference-power constraint at the PR receiver based upon the CSI
on the channels from the CR transmitter to the CR and PR receivers,
in \cite{Kang08} under the combined interference-power constraint
and the CR's own transmit-power constraint, and in \cite{Zhang08a},
\cite{Chen07} based upon the additional CSI on the PR fading
channel. On the other hand, for the multi-antenna PR and CR
channels, in \cite{Zhang08b} the authors proposed both optimal and
suboptimal spatial adaptation schemes for the CR transmitter.
Information-theoretic limits for multiuser multi-antenna/fading CR
channels have also been studied in, e.g., \cite{ZhangLan08a},
\cite{Zhang08MAC}.

In this paper, we consider the single-antenna fading PR and CR
channels. For such scenarios, the interference-power constraint at
the PR receiver is usually one of the following two types: One is
the {\it long-term} constraint that regulates the {\it average}
interference power (AIP) over all the fading states, while the other
is the {\it short-term} one that limits the {\it peak} interference
power (PIP) at each of the fading states. Clearly, the PIP
constraint is more restrictive than the AIP counterpart given the
same average and peak interference-power threshold. From the CR's
perspective, the AIP constraint is more favorable than the PIP,
since the former provides the CR more flexibility for dynamically
allocating transmit powers over the fading states and, thus,
achieves larger fading channel capacities \cite{Aissa07},
\cite{Kang08}. However, the effect of the AIP- and PIP-based CR
power control on the PR transmissions has not yet been studied in
the literature, to the author's best knowledge. At a first glance,
the more restrictive PIP constraint seems to be a better option than
the AIP from the perspective of protecting the PR. Some
surprisingly, in this paper the contrary conclusion is rigourously
shown, i.e., the AIP constraint is indeed superior over the PIP in
terms of various forms of capacity limits achievable for the PR
fading channel, e.g., the ergodic and outage capacities. This result
is due to an interesting {\it interference diversity} phenomenon for
the PR transmission: Due to the convexity of the capacity function
with respect to the noise/interferecne power, more randomized
interference powers over the fading states at the PR receiver in the
AIP case are more advantageous over deterministic ones in the PIP
case for minimizing the resultant PR capacity losses. Therefore,
this paper provides an important design rule for the CR networks in
practice, i.e., the AIP constraint may result in improved fading
channel capacities over the PIP for both the CR and PR
transmissions.

The rest of this paper is organized as follows. Section
\ref{sec:channel model} presents the system model for spectrum
sharing. Section \ref{sec:secondary user capacity} considers the CR
link and summarizes the results known in the literature on the CR
fading channel capacities and the corresponding optimal
power-control policies under the AIP or the PIP constraint. Section
\ref{sec:primary user capacity} then studies various forms of the PR
fading channel capacities under the interference from the CR
transmitter due to the AIP- or PIP-based CR power control, and
proves that the AIP constraint results in larger channel capacities
than the PIP for the same power threshold. Section
\ref{sec:simulation results} considers both PR and CR transmissions
and shows the simulation results on their jointly achievable
capacities under spectrum sharing. Finally, Section
\ref{sec:conclusion} concludes this paper.

{\it Notation}: $|z|$ denotes the Euclidean norm of a complex number
$z$. $\mathbb{E}[\cdot]$ denotes the statistical expectation.
$Pr\{\cdot\}$ denotes the probability. ${\bf 1}(\mathcal{A})$
denotes the indicator function taking the value of one if the event
$\mathcal{A}$ is true, and the value of zero otherwise. The
distribution of a circular symmetric complex Gaussian (CSCG) random
variable (r.v.) with mean $x$ and variance $y$ is denoted as
$\mathcal{CN}(x,y)$, and $\sim$ means ``distributed as''.
$\max(x,y)$ and $\min(x,y)$ denote, respectively, the maximum and
the minimum between two real numbers $x$ and $y$; for a real number
$a$, $(a)^+\triangleq\max(0,a)$.

\section{System Model} \label{sec:channel model}

\begin{figure}
\psfrag{a}{\hspace{0.3in} PR-Tx}\psfrag{b}{\hspace{0.3in} CR-Tx}
\psfrag{c}{PR-Rx}\psfrag{d}{CR-Rx}
\psfrag{e}{$\mv{f}$}\psfrag{f}{$\mv{h}$}\psfrag{h}{$\mv{g}$}
\begin{center}
\scalebox{1.0}{\includegraphics*[88pt,617pt][302pt,755pt]{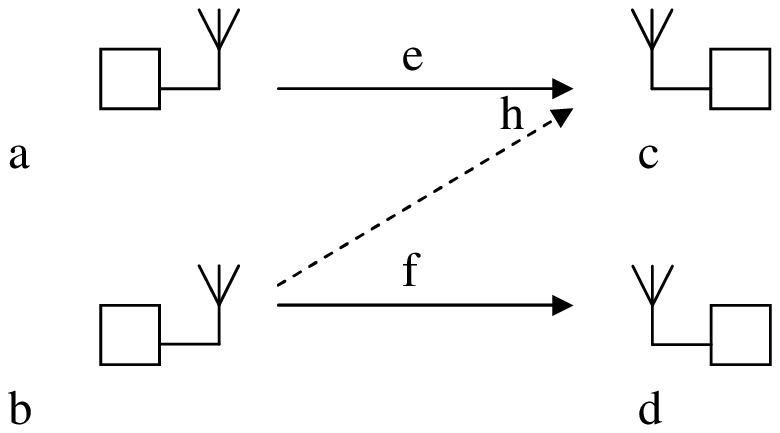}}
\end{center}
\caption{Spectrum sharing between a PR link and a CR
link.}\label{fig:system model}
\end{figure}

As shown in Fig. \ref{fig:system model}, a spectrum sharing scenario
is considered where a CR link consisting of a CR transmitter (CR-Tx)
and a CR receiver (CR-Rx) shares the same bandwidth for transmission
with an existing PR link consisting of a PR transmitter (PR-Tx) and
a PR receiver (PR-Rx). All terminals are assumed to be equipped with
a single antenna. We consider a slow-fading environment and, for
simplicity, assume a block-fading (BF) channel model for all the
channels involved in the PR-CR network. Furthermore, we assume
coherent communication and thus only the fading channel power gain
(amplitude square) is of interest. Denote $\mv{h}$ as the r.v. for
the power gain of the fading channel from CR-Tx to CR-Rx. Similarly,
$\mv{g}$ and $\mv{f}$ are defined for the fading channel from CR-Tx
to PR-Rx and PR-Tx to PR-Rx, respectively. For convenience, in this
paper we ignore the channel from PR-Tx to CR-Rx. Denote $i$ as the
joint fading state for all the channels involved. Then, let $h_i$ be
the $i$th component in $\mv{h}$ for fading state $i$; similarly,
$g_i$ and $f_i$ are defined. It is assumed that $h_i$, $g_i$, and
$f_i$ are independent of each other, and all of them have continuous
probability density functions (PDFs). It is also assumed that the
additive noises at both PR-Rx and CR-Rx are independent CSCG r.v.s
each $\sim\mathcal{CN}(0,1)$. Since we are interested in the
information-theoretic limits of the PR and CR channels, it is
assumed that the optimal Gaussian codebook is used at both PR-Tx and
CR-Tx.

For the PR link, the transmit power at fading state $i$ is denoted
as $q_i$. It is assumed that the PR is oblivious to the CR
transmission and thus does not attempt to protect the CR nor
cooperate with the CR for transmission. Due to the CR transmission,
PR-Rx may observe an additional interference power, denoted as
$I_i=g_ip_i$, at fading state $i$ where $p_i$ denotes the CR
transmit power at fading state $i$. The PR power-control policy,
denoted as $\mathcal{P}_{\rm PR}(\mv{f},\mv{I})$, is in general a
mapping from $f_i$ and $I_i$ to $q_i$ for each $i$, with $I_i$ being
the $i$th component of $\mv{I}$, subject to an average transmit
power constraint $Q$, i.e., $\mathbb{E}[q_i]\leq Q$. By treating the
interference from CR-Tx as the additional Gaussian noise at PR-Rx,
the mutual information of the PR fading channel for fading state $i$
under a given $\mathcal{P}_{\rm PR}(\mv{f},\mv{I})$ can then be
expressed as \cite{Coverbook}
\begin{equation}\label{eq:PR MI}
R_{\rm PR}(i)=\log\left(1+\frac{f_iq_i}{1+I_i}\right).
\end{equation}

For the CR link, since the CR needs to protect the PR transmission,
the CR power-control policy needs to be aware of both the PR and CR
transmissions. It is assumed that the channel power gains $g_i$ and
$h_i$ are perfectly known at CR-Tx for each $i$.\footnote{In
practice, the channel power gain between CR-Tx and PR-Rx can be
obtained at CR-Tx via, e.g., estimating the received signal power
from PR-Rx when it transmits, under the assumptions of the
pre-knowledge on the PR-Rx transmit power level and the channel
reciprocity.} Thus, the CR power-control policy can be expressed as
$\mathcal{P}_{\rm CR}(\mv{h},\mv{g})$ with $\mv{h}$ consisting of
$h_i$'s, subject to an average transmit power constraint $P$, i.e.,
$\mathbb{E}[p_i]\leq P$. The mutual information of the CR fading
channel for fading state $i$ under a given $\mathcal{P}_{\rm
CR}(\mv{h},\mv{g})$ can then be expressed as
\begin{equation}\label{eq:CR MI}
R_{\rm CR}(i)=\log\left(1+h_ip_i\right).
\end{equation}

In this paper, we assume that the CR protects the PR transmission
via transmit power control by applying the interference-power
constraint at PR-Rx, in the form of either the AIP or the PIP. The
AIP constraint regulates the average interference power at PR-Rx
over all the fading states and is thus expressed as
\begin{equation}\label{eq:AIP}
\mathbb{E}[I_i]\leq \Gamma_{a} \ \ {\rm or} \ \
\mathbb{E}[g_ip_i]\leq \Gamma_{a}
\end{equation}
where $\Gamma_a$ denotes the predefined AIP threshold. In contrast,
the PIP constraint limits the peak interference power at PR-Rx at
each of the fading states and is thus expressed as
\begin{equation}\label{eq:PIP}
I_i\leq \Gamma_{p}, \forall i \ \ {\rm or} \ \ g_ip_i\leq
\Gamma_{p}, \forall i
\end{equation}
where $\Gamma_p$ denotes the predefined PIP threshold. Note that the
PIP constraint is in general more restrictive over the AIP. This can
be easily seen by observing that given $\Gamma_p=\Gamma_a$,
(\ref{eq:PIP}) implies (\ref{eq:AIP}) but not vice versa. Therefore,
from the CR's perspective, applying the AIP constraint is more
favorable than the PIP because the former provides the CR more
flexibility for adapting transmit powers over the fading states.

In this paper, we consider two well-known capacity limits for the
fading PR and CR channels, namely, the ergodic capacity and the
outage capacity. The ergodic capacity measures the maximum average
rate over the fading states \cite{Goldsmith97}, while the resultant
mutual information for each fading state can be variable. In
contrast, the outage capacity measures the maximum constant rate
that is achievable over each of the fading states with a guaranteed
outage probability \cite{Shamai}, \cite{Caire}. In the extreme case
of zero outage probability, the outage capacity is also known as the
delay-limited capacity \cite{Tse98}. In general, the ergodic and
delay-limited capacities can be considered as the throughput limits
for a fading channel with no and with minimal transmission delay
requirement, respectively.

\section{CR Capacities under AIP versus PIP Constraint} \label{sec:secondary user
capacity}

In this section, we summarize the results known in the literature on
the CR fading channel capacities and the corresponding optimal
power-control policies under the AIP or the PIP constraint. Consider
first the AIP case. The optimal $\mathcal{P}_{\rm
CR}(\mv{h},\mv{g})$  to achieve the ergodic capacity of the CR
fading channel is expressed as \cite{Ghasemi07}
\begin{equation}\label{eq:power AIP ER}
p_i^{{\rm ER}, a}=\left(\frac{1}{\nu g_i}-\frac{1}{h_i}\right)^+
\end{equation}
where $\nu$ is a positive constant determined from
$\mathbb{E}[g_ip_i^{{\rm ER}, a}]=\Gamma_a$. Note that the above
power control resembles the well-known ``water filling (WF)'' power
control \cite{Coverbook}, \cite{Goldsmith97}, which achieves the
ergodic capacity of the conventional fading channel, whereas there
is also a key difference here: In (\ref{eq:power AIP ER}), the
so-called ``water level'' for WF, $1/(\nu g_i)$, depends on the
channel power gain $g_i$ from CR-Tx to PR-Rx as compared to being a
constant in the standard WF power control. Substituting
(\ref{eq:power AIP ER}) into $R_{\rm CR}(i)$ given in (\ref{eq:CR
MI}) and taking the expectation of the resultant $R_{\rm CR}(i)$
over $i$, we obtain the ergodic capacity for the CR under the AIP
constraint, denoted as $C_{\rm CR}^{{\rm ER},a}$. On the other hand,
the optimal $\mathcal{P}_{\rm CR}(\mv{h},\mv{g})$ to achieve the
outage capacity of the CR fading channel with a guaranteed outage
probability, $\epsilon_{0}$, is expressed as \cite{Aissa07},
\cite{Kang08}
\begin{eqnarray}\label{eq:power AIP OUT}
p_i^{{\rm OUT}, a}=\left\{ \begin{array}{ll} \frac{\zeta_a}{h_i}, &
\frac{h_i}{g_i}\geq \lambda \\ 0, & {\rm otherwise}
\end{array} \right.
\end{eqnarray}
where $\lambda$ is a nonnegative constant determined from
$Pr\{h_i/g_i< \lambda\}=\epsilon_0$, and $\zeta_a$ is the constant
signal-to-noise ratio (SNR) at CR-Rx obtained from
$\mathbb{E}[g_ip_i^{{\rm OUT}, a}]=\Gamma_a$. Note that the above
power control resembles the well-known ``truncated channel inversion
(TCI)'' power control \cite{Goldsmith97} to achieve the outage
capacity of the conventional fading channel \cite{Caire}, while
there is also a difference between (\ref{eq:power AIP OUT}) and the
standard TCI on the threshold value $\lambda$ for power truncation
(no transmission): in (\ref{eq:power AIP OUT}) $\lambda$ depends on
the ratio between $h_i$ and $g_i$, as compared to only $h_i$ in the
standard TCI. The corresponding outage capacity, denoted as $C_{\rm
CR}^{{\rm OUT},a}(\epsilon_0)$, is then obtained as
$\log(1+\zeta_a)$. Note that if $\epsilon_0=0$, it then follows that
$\lambda=0$ and the resultant power-control policy in (\ref{eq:power
AIP OUT}) becomes the ``channel inversion (CI)'' power control
\cite{Goldsmith97}, which achieves the delay-limited capacity for
the CR \cite{Caire}, denoted as $C_{\rm CR}^{{\rm DL},a}$.

Consider next the PIP case. It is easy to show that in this case the
optimal $\mathcal{P}_{\rm CR}(\mv{h},\mv{g})$ should use the maximum
possible transmit power for each fading state $i$ so as to maximize
both the ergodic and the outage capacities,\footnote{It is noted
that to achieve the same outage capacity for the CR, under the
assumption that the CR channel power gain $h_i$ is known at CR-Tx
for each $i$, it is possible for the CR power control to assign a
smaller power value $\zeta_p/h_i$ than $\Gamma_p/g_i$ if
 the former happens to be smaller than the latter for some
$i$. However, if $h_i$s are not available at CR-Tx, it is optimal
for the CR to assign the maximum possible transmit power
$\Gamma_p/g_i$ for each $i$ to minimize the outage probability.
Therefore, in this paper we consider that $p_i^{{\rm OUT},
p}=\Gamma_p/g_i, \forall i$.} thus, we have
\begin{equation}\label{eq:power PIP}
p_i^{{\rm ER}, p}=p_i^{{\rm OUT}, p}=\frac{\Gamma_p}{g_i}, \forall
i.
\end{equation}
The resultant ergodic capacity, denoted as $C_{\rm CR}^{{\rm
ER},p}$, is then obtained accordingly from (\ref{eq:CR MI}). The
resultant outage probability, $\epsilon_0$, can be shown equal to
$Pr\{(\Gamma_p/g_i)h_i<\zeta_p\}$ where $\zeta_p$ is the constant
SNR at CR-Rx. For a given $\epsilon_0$, the corresponding $\zeta_p$
can thus be obtained, as well as the corresponding outage capacity,
$C_{\rm CR}^{{\rm OUT},p}(\epsilon_0)= \log(1+\zeta_p)$. It is easy
to see that if $\epsilon_0=0$, it follows that $\zeta_p=0$ and thus
the delay-limited capacity for the CR under the PIP constraint,
denoted as $C_{\rm CR}^{{\rm DL},p}$, is always zero.

Comparing the power allocations in (\ref{eq:power AIP ER}) and
(\ref{eq:power AIP OUT}) for the AIP case with those in
(\ref{eq:power PIP}) for the PIP case, it is easy to see that the
former power allocations are more flexible than the latter ones over
the fading states. Furthermore, the AIP-based power control depends
on both the channel power gains, $h_i$ and $g_i$, while the
PIP-based power control only depends on $g_i$. As a result, under
the same average and peak power threshold, i.e.,
$\Gamma_a=\Gamma_p$, it is easy to show that $C_{\rm CR}^{{\rm
ER},a}\geq C_{\rm CR}^{{\rm ER},p}$, $C_{\rm CR}^{{\rm
OUT},a}(\epsilon_0)\geq C_{\rm CR}^{{\rm OUT},p}(\epsilon_0)$, and
$C_{\rm CR}^{{\rm DL},a}\geq C_{\rm CR}^{{\rm DL},p}$. Thus, the AIP
is superior over the PIP in terms of the fading channel capacity
limits achievable for the CR.

\section{PR Capacities under AIP versus PIP Constraint} \label{sec:primary user
capacity}

In this section, we will present the main contributions of this
paper on the comparison of the effects of AIP and PIP constraints on
various fading channel capacities for the PR. For fair comparison,
we consider the same average and peak interference-power threshold,
i.e., $\Gamma_a=\Gamma_p=\Gamma$. Note that both AIP and PIP
constraints are satisfied with equalities at PR-Rx for all the CR
power-control policies presented in Section \ref{sec:secondary user
capacity}, i.e., for the AIP case, $\mathbb{E}[I_i]=\Gamma$; and for
the PIP case, $I_i=\Gamma, \forall i$. In the following two
subsections, we consider the ergodic capacity and the outage
capacity for the PR fading channel, respectively.

\subsection{Ergodic Capacity}

\subsubsection{Constant-Power Policy} The simplest power
control for the PR is the {\it constant-power} (CP) policy, i.e.,
\begin{equation}\label{eq:CP}
q_i^{\rm CP}=Q, \ \forall i.
\end{equation}
CP is an attractive scheme in practice from an implementation
viewpoint since it does not require any CSI on the PR fading channel
at PR-Tx. In addition, CP satisfies a peak transmit-power constraint
for all the fading states. With CP, the ergodic capacity of the PR
fading channel in the AIP case can be obtained from (\ref{eq:PR MI})
and expressed as
\begin{equation}\label{eq:primary capacity AIP CP}
C^{{\rm ER},a}_{\rm
PR,CP}=\mathbb{E}\left[\log\left(1+\frac{f_iQ}{1+I_i}\right)\right]
\end{equation}
and in the PIP case expressed as
\begin{equation}\label{eq:primary capacity PIP CP}
C^{{\rm ER},p}_{\rm
PR,CP}=\mathbb{E}\left[\log\left(1+\frac{f_iQ}{1+\Gamma}\right)\right].
\end{equation}

\begin{theorem}\label{theorem:CP}
With the CP policy for the PR, $C^{{\rm ER},a}_{\rm PR,CP}\geq
C^{{\rm ER},p}_{\rm PR,CP}$, under the same average and peak power
threshold $\Gamma$.
\end{theorem}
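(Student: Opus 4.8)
The plan is to reduce everything to a single convexity fact and then apply Jensen's inequality. The only difference between the two quantities is that the interference $I_i$ is a random variable in the AIP case (with $\mathbb{E}[I_i]=\Gamma$) but a deterministic constant $\Gamma$ in the PIP case. So the statement is really the assertion that randomizing $I_i$ about its mean $\Gamma$ can only increase the PR's ergodic rate. The key observation is that, for a fixed PR gain $f_i$, the per-state rate $\log\!\left(1+\frac{f_iQ}{1+I}\right)$ is a \emph{convex} (and decreasing) function of the interference power $I$. I would verify this by writing it as $\log(1+I+f_iQ)-\log(1+I)$ and noting that its second derivative in $I$ is $\frac{1}{(1+I)^2}-\frac{1}{(1+I+f_iQ)^2}$, which is strictly positive whenever $f_iQ>0$.

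Next I would exploit the independence structure of the model. Since $I_i=g_ip_i$ with $p_i$ a function of $(h_i,g_i)$, and since $h_i,g_i$ are independent of $f_i$ by assumption, the interference $I_i$ is independent of the PR gain $f_i$. Conditioning on $f_i=f$ therefore preserves the mean, $\mathbb{E}[I_i\mid f_i=f]=\mathbb{E}[I_i]=\Gamma$, where the last equality is the AIP constraint met with equality. Applying Jensen's inequality to the convex function above, conditionally on $f_i$, gives
\begin{equation}
\mathbb{E}\left[\left.\log\left(1+\frac{f_iQ}{1+I_i}\right)\,\right|\,f_i\right]\geq \log\left(1+\frac{f_iQ}{1+\Gamma}\right).
\end{equation}
Taking the expectation over $f_i$ on both sides and using the tower property then yields $C^{{\rm ER},a}_{\rm PR,CP}\geq C^{{\rm ER},p}_{\rm PR,CP}$, because the averaged right-hand side is exactly the PIP capacity in (\ref{eq:primary capacity PIP CP}). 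This is the ``interference diversity'' effect described in the introduction, made precise.

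The main obstacle I anticipate is not the convexity calculation, which is routine, but making the conditioning step airtight: one must confirm that $\mathbb{E}[I_i\mid f_i]$ genuinely collapses to the unconditional mean $\Gamma$. This relies on the mutual independence of $f_i$ from $(g_i,h_i)$ together with the fact that the CR power-control policies of Section \ref{sec:secondary user capacity} depend only on $(h_i,g_i)$; any dependence of the CR transmit power on $f_i$ would invalidate the argument. A secondary subtlety is that Jensen's inequality delivers a \emph{per-$f_i$} comparison, so the two schemes must be compared at a common $f_i$ and only then averaged, rather than comparing the two unconditional expectations directly.
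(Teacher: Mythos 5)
Your proposal is correct and follows essentially the same route as the paper's own proof: condition on $f_i$ using the independence of $f_i$ and $I_i$, apply Jensen's inequality to the convex function $I\mapsto\log\bigl(1+\frac{f_iQ}{1+I}\bigr)$, and invoke $\mathbb{E}[I_i]=\Gamma$. The only difference is that you spell out the second-derivative computation and the conditioning subtleties explicitly, which the paper leaves implicit.
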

\begin{proof}
The following equalities/inequality hold:
\begin{eqnarray*}
C^{{\rm ER},a}_{\rm
PR,CP}&\overset{(a)}{=}&\mathbb{E}_f\mathbb{E}_I\left[\log\left(1+\frac{f_iQ}{1+I_i}\right)\right]
\\ &\overset{(b)}{\geq}&
\mathbb{E}_f\left[\log\left(1+\frac{f_iQ}{1+\mathbb{E}[I_i]}\right)\right]
\\ &\overset{(c)}{=}&
\mathbb{E}_f\left[\log\left(1+\frac{f_iQ}{1+\Gamma}\right)\right]
\\ &\overset{(d)}{=}& C^{{\rm ER},p}_{\rm
PR,CP}
\end{eqnarray*}
where $(a)$ is from (\ref{eq:primary capacity AIP CP}) and due to
independence of $f_i$ and $g_i$ and thus $f_i$ and $I_i$; $(b)$ is
due to convexity of the function
$f(x)=\log\left(1+\frac{\kappa}{1+x}\right)$ where $\kappa$ is any
positive constant and $x\geq 0$, and Jensen's inequality (e.g.,
\cite{Coverbook}); $(c)$ and $(d)$ are due to
$\mathbb{E}[I_i]=\Gamma$ and (\ref{eq:primary capacity PIP CP}),
respectively.
\end{proof}

Theorem \ref{theorem:CP} suggests that, some surprisingly, the AIP
constraint that results in randomized interference power levels over
the fading states at PR-Rx is in fact more advantageous for
improving the PR ergodic capacity over the PIP constraint that
results in constant interference power levels at all the fading
states, for the same value of $\Gamma$. As shown in the above proof,
this result is mainly due to the convexity of the capacity function
with respect to the noise/interference power. We thus name this
interesting phenomenon for the PR transmission in a CR network as
``interference diversity''.

\subsubsection{Water-Filling Power Control}
If the effective channel power gain, $f_i/(1+I_i)$, for the PR
fading channel is known at PR-Tx for each $i$, the optimal
$\mathcal{P}_{\rm PR}(\mv{f},\mv{I})$ to achieve the ergodic
capacity for the PR is the standard WF power-control policy. In the
AIP case, the optimal power allocation is expressed as
\begin{equation}\label{eq:WF AIP}
q_i^{{\rm WF}, a}=\left(\frac{1}{\mu_a}-\frac{1+I_i}{f_i}\right)^+
\end{equation}
where $\mu_a$ controls the water level, $1/\mu_a$, with which
$\mathbb{E}[q_i^{{\rm WF},a}]=Q$. From (\ref{eq:WF AIP}), the
ergodic capacity for the PR in the AIP case is obtained as
\begin{equation}\label{eq:capacity WF AIP}
C^{{\rm ER},a}_{\rm
PR,WF}=\mathbb{E}\left[\left(\log\left(\frac{f_i}{\mu_a(1+I_i)}\right)\right)^+\right].
\end{equation}

Similarly, we can obtain the optimal WF-based power control for the
PR in the PIP case as
\begin{equation}\label{eq:WF PIP}
q_i^{{\rm WF},p}=\left(\frac{1}{\mu_p}-\frac{1+\Gamma}{f_i}\right)^+
\end{equation}
where $\mu_p$ is obtained from $\mathbb{E}[q_i^{{\rm WF},p}]=Q$. The
corresponding ergodic capacity then becomes
\begin{equation}\label{eq:capacity WF PIP}
C^{{\rm ER},p}_{\rm
PR,WF}=\mathbb{E}\left[\left(\log\left(\frac{f_i}{\mu_p(1+\Gamma)}\right)\right)^+\right].
\end{equation}

Next, we first show that an intuitive method to compare $C^{{\rm
ER},a}_{\rm PR,WF}$ in (\ref{eq:capacity WF AIP}) and $C^{{\rm
ER},p}_{\rm PR,WF}$ in (\ref{eq:capacity WF PIP}) does not work
here. Then, we present a different method for such comparison. One
intuitive method to compare $C^{{\rm ER},a}_{\rm PR,WF}$ and
$C^{{\rm ER},p}_{\rm PR,WF}$ would be as follows. If it can be shown
that $\mu_a<\mu_p$, then due to convexity of the function
$g(x)=\left(\log\left(\frac{\kappa}{1+x}\right)\right)^+$ where
$\kappa$ is a positive constant and $x\geq 0$, and similarly like
the proof of Theorem \ref{theorem:CP}, it can be shown that $C^{{\rm
ER},a}_{\rm PR,WF}> C^{{\rm ER},p}_{\rm PR,WF}$. Unfortunately, in
the following we prove by contradiction that the opposite inequality
is in fact true for $\mu_a$ and $\mu_p$. Thus, we can not conclude
which one of $C^{{\rm ER},a}_{\rm PR,WF}$ and $C^{{\rm ER},p}_{\rm
PR,WF}$ is indeed larger by this intuitive method.

Supposing that $\mu_a<\mu_p$, we then have
\begin{eqnarray*}
\mathbb{E}[q_i^{{\rm WF}, a}]&=&
\mathbb{E}\left[\left(\frac{1}{\mu_a}-\frac{1+I_i}{f_i}\right)^+\right]
\\ &>&
\mathbb{E}\left[\left(\frac{1}{\mu_p}-\frac{1+I_i}{f_i}\right)^+\right]
\\ &=&
\mathbb{E}_f\mathbb{E}_I\left[\left(\frac{1}{\mu_p}-\frac{1+I_i}{f_i}\right)^+\right]
\\ &\overset{(a)}{\geq}&
\mathbb{E}_f\left[\left(\frac{1}{\mu_p}-\frac{1+\mathbb{E}[I_i]}{f_i}\right)^+\right]
\\ &=&
\mathbb{E}\left[\left(\frac{1}{\mu_p}-\frac{1+\Gamma}{f_i}\right)^+\right]
\\ &=&
\mathbb{E}[q_i^{{\rm WF},p}]
\\ &=&
Q
\end{eqnarray*}
where $(a)$ is due to convexity of the function
$z(x)=\left(\kappa_1-\frac{1+x}{\kappa_2}\right)^+$ where $\kappa_1$
and $\kappa_2$ are positive constants and $x\geq 0$, and Jensen's
inequality. Since it is known that $\mathbb{E}[q_i^{{\rm WF},
a}]=Q$, which contradicts with $\mathbb{E}[q_i^{{\rm WF}, a}]>Q$
shown in the above under the presumption that $\mu_a< \mu_p$, it
thus concludes that $\mu_a\geq \mu_p$.

From the above discussions, we know that an alternative approach is
needed for comparing $C^{{\rm ER},a}_{\rm PR,WF}$ and $C^{{\rm
ER},p}_{\rm PR,WF}$. The result for this comparison and its proof
are given below:

\begin{theorem}\label{theorem:WF}
With the WF power control for the PR, $C^{{\rm ER},a}_{\rm
PR,WF}\geq C^{{\rm ER},p}_{\rm PR,WF}$, under the same average and
peak power threshold $\Gamma$.
\end{theorem}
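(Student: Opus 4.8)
The plan is to avoid comparing the water levels $\mu_a$ and $\mu_p$ directly (which, as just shown, runs the wrong way), and instead to exploit the fact that $C^{{\rm ER},a}_{\rm PR,WF}$ is the \emph{maximum} ergodic rate over all PR power policies $\mathcal{P}_{\rm PR}(\mv{f},\mv{I})$ subject only to $\mathbb{E}[q_i]\le Q$, with the interference $I_i$ being a fixed random quantity whose law is determined by the CR. Consequently $C^{{\rm ER},a}_{\rm PR,WF}$ dominates the rate achieved by \emph{any} feasible, possibly suboptimal, PR policy. The key idea is to feed the PIP-optimal allocation into the AIP problem and show that it already recovers $C^{{\rm ER},p}_{\rm PR,WF}$.

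First I would observe that the PIP-optimal allocation $q_i^{{\rm WF},p}$ in (\ref{eq:WF PIP}) depends on the fading state only through $f_i$ and satisfies $\mathbb{E}[q_i^{{\rm WF},p}]=Q$. Hence the same mapping $f_i\mapsto q_i^{{\rm WF},p}$ is a feasible PR policy in the AIP problem: it never references $I_i$, so its average power is unchanged under the AIP law of the interference. By optimality of WF in the AIP case,
\[
C^{{\rm ER},a}_{\rm PR,WF}\ \ge\ \mathbb{E}\!\left[\log\!\left(1+\frac{f_i\,q_i^{{\rm WF},p}}{1+I_i}\right)\right].
\]

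Next I would lower-bound the right-hand side. Since $f_i$ is independent of $g_i$ and $h_i$, it is independent of $I_i=g_ip_i$; conditioning on $f_i$ (and hence on $q_i^{{\rm WF},p}$) and taking the inner expectation over $I_i$ lets me invoke the same convexity as in Theorem \ref{theorem:CP} — the map $x\mapsto\log(1+\kappa/(1+x))$ is convex — so Jensen's inequality gives
\[
\mathbb{E}_I\!\left[\log\!\left(1+\frac{f_i\,q_i^{{\rm WF},p}}{1+I_i}\right)\right]\ \ge\ \log\!\left(1+\frac{f_i\,q_i^{{\rm WF},p}}{1+\Gamma}\right),
\]
using $\mathbb{E}[I_i]=\Gamma$. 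Taking $\mathbb{E}_f$ of both sides, the right-hand side is exactly $C^{{\rm ER},p}_{\rm PR,WF}$ in (\ref{eq:capacity WF PIP}) because $q_i^{{\rm WF},p}$ is the allocation achieving the PIP capacity; chaining the two displays then yields $C^{{\rm ER},a}_{\rm PR,WF}\ge C^{{\rm ER},p}_{\rm PR,WF}$.

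The conceptual obstacle — and the reason the naive water-level comparison fails — is that in the AIP case the optimal PR power $q_i^{{\rm WF},a}$ itself depends on the random $I_i$, so one cannot directly apply Jensen to the AIP capacity expression with its own optimal allocation. The resolution is precisely the suboptimality step: substituting the $I$-independent PIP allocation decouples the transmit power from the interference, which is what makes the convexity-in-$I_i$ argument applicable. I expect the only points requiring care to be the verification that $q_i^{{\rm WF},p}$ remains average-power-feasible under the AIP law of $I_i$, and the independence $f_i\perp I_i$ that licenses the conditional Jensen step.
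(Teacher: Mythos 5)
Your proof is correct, and it takes a genuinely different route from the paper's. The paper argues on the dual side: it writes both capacities as min--max Lagrangian problems, evaluates the PIP dual function at the AIP-optimal multiplier $\mu_a$ (which can only increase the minimum), applies Jensen's inequality to the dual-function integrand viewed as a function of $\mathbb{E}[I_i]$, and then identifies the result as the AIP capacity via strong duality. You instead argue on the primal side: the PIP-optimal allocation $q_i^{{\rm WF},p}$ depends only on $f_i$, remains average-power feasible under the AIP interference law, and is therefore a suboptimal candidate in the AIP maximization; conditioning on $f_i$ and using $f_i\perp I_i$ then lets you apply exactly the convexity of $x\mapsto\log(1+\kappa/(1+x))$ from Theorem~\ref{theorem:CP} together with $\mathbb{E}[I_i]=\Gamma$. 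Your version is more elementary: it needs only the definition of the ergodic capacity as a supremum over feasible policies, reuses the already-verified convexity from the constant-power case, and avoids both Lagrange duality and the check that the more complicated dual integrand $\bigl(\log(f_i/((1+x)\mu))\bigr)^+-\bigl(1-(1+x)\mu/f_i\bigr)^+$ is convex in $x$. What the paper's dual machinery buys is uniformity: the same template carries over to the outage/TCI setting of Theorem~\ref{theorem:TCI}, where the primal objective is an indicator function that is neither convex nor concave in $I_i$, so your direct primal substitution would not go through there. All the points you flag as requiring care (feasibility of the $f$-only policy under the AIP law, and the independence $f_i\perp I_i$, which follows from the assumed mutual independence of $f_i$, $g_i$, $h_i$) do check out.
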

\begin{proof}
The proof is based on the Lagrange duality of convex optimization
\cite{Boydbook}. First, we rewrite $C^{{\rm ER},a}_{\rm PR,WF}$ and
$C^{{\rm ER},p}_{\rm PR,WF}$ as the optimal values of the following
min-max optimization problems:
\begin{eqnarray}\label{eq:min max AIP}
C^{{\rm ER},a}_{\rm PR,WF}=\min_{\mu:\mu\geq 0}\max_{\{q_i\}:q_i\geq
0, \forall i}
\mathbb{E}\left[\log\left(1+\frac{f_iq_i}{1+I_i}\right)\right]
-\mu(\mathbb{E}[q_i]-Q)
\end{eqnarray}
and
\begin{eqnarray}\label{eq:min max PIP}
C^{{\rm ER},p}_{\rm PR,WF}=\min_{\mu:\mu\geq 0}\max_{\{q_i\}:q_i\geq
0, \forall i}
\mathbb{E}\left[\log\left(1+\frac{f_iq_i}{1+\Gamma}\right)\right]
-\mu(\mathbb{E}[q_i]-Q),
\end{eqnarray}
respectively. Note that $\mu_a$ and $\{q_i^{{\rm WF}, a}\}$ are the
optimal solutions to the ``min'' and ``max'' problems in
(\ref{eq:min max AIP}), respectively, and $\mu_p$ and $\{q_i^{{\rm
WF}, p}\}$ are the optimal solutions to the ``min'' and ``max''
problems in (\ref{eq:min max PIP}), respectively. Then, we have the
following equalities/inequalities:
\begin{eqnarray}
C^{{\rm ER},p}_{\rm PR,WF}&=&\min_{\mu:\mu\geq
0}\mathbb{E}\left[\left(\log\left(\frac{f_i}{(1+\Gamma)\mu}\right)\right)^+\right]-
\mathbb{E}\left[\left(1-\frac{(1+\Gamma)\mu}{f_i}\right)^+\right]+\mu
Q \label{eq:WF 1}
\\ &\leq& \mathbb{E}\left[\left(\log\left(\frac{f_i}{(1+\Gamma)\mu_a}\right)\right)^+\right]-
\mathbb{E}\left[\left(1-\frac{(1+\Gamma)\mu_a}{f_i}\right)^+\right]+\mu_aQ
\label{eq:WF 2}
\\ &=& \mathbb{E}_f\left[\left(\log\left(\frac{f_i}{(1+\mathbb{E}[I_i])\mu_a}\right)\right)^+-
\left(1-\frac{(1+\mathbb{E}[I_i])\mu_a}{f_i}\right)^+\right]+\mu_a Q
\label{eq:WF 3}
\\ &\leq&
\mathbb{E}_f\mathbb{E}_I\left[\left(\log\left(\frac{f_i}{(1+I_i)\mu_a}\right)\right)^+-
\left(1-\frac{(1+I_i)\mu_a}{f_i}\right)^+\right]+\mu_aQ \label{eq:WF
4} \\ &=& \min_{\mu:\mu\geq
0}\mathbb{E}\left[\left(\log\left(\frac{f_i}{(1+I_i)\mu}\right)\right)^+\right]-
\mathbb{E}\left[\left(1-\frac{(1+I_i)\mu}{f_i}\right)^+\right]+\mu Q
\label{eq:WF 5}
\\ &=& C^{{\rm ER},a}_{\rm PR,WF} \ , \label{eq:WF 6}
\end{eqnarray}
where (\ref{eq:WF 1}) is obtained by substituting $\{q_i^{{\rm WF},
p}\}$ in (\ref{eq:WF PIP}) with $\mu_p$ replaced by an arbitrary
positive $\mu$ into (\ref{eq:min max PIP}); (\ref{eq:WF 2}) is due
to the fact that $\mu_a$ is not the minimizer $\mu_p$ for
(\ref{eq:WF 1}); (\ref{eq:WF 3}) is due to $\mathbb{E}[I_i]=\Gamma$;
(\ref{eq:WF 4}) is due to convexity of the function in
$\mathbb{E}_f[\cdot]$ of (\ref{eq:WF 3}) with respect to
$\mathbb{E}[I_i]$ for any given $f_i$ and Jensen's inequality;
(\ref{eq:WF 5}) and (\ref{eq:WF 6}) are due to the fact that $\mu_a$
and $\{q_i^{{\rm WF}, a}\}$ in (\ref{eq:WF AIP}) are the optimal
solutions to the min-max optimization problem in (\ref{eq:min max
AIP}).
\end{proof}

Theorem \ref{theorem:WF} suggests that, similarly like the CP
policy, under the WF-based power control, randomized interference
power levels due to the CR transmission in the AIP case is superior
over constant interference power levels in the PIP case in terms of
the maximum achievable PR ergodic capacity. However, the
interference diversity gain observed here is not as obvious as that
in the CP case due to the more complex WF-based PR power control.

\subsection{Outage Capacity}

\subsubsection{Constant-Power Policy}

With the CP policy in (\ref{eq:CP}), for a given outage probability,
$\epsilon_0$, the maximum achievable constant SNR at PR-Rx, denoted
as $\gamma_a$, in the AIP case can be obtained from
$Pr\{(f_iQ)/(1+I_i)< \gamma_a \}=\epsilon_0$, and the corresponding
outage capacity, denoted as $C_{\rm PR, CP}^{{\rm
OUT},a}(\epsilon_0)$, is equal to $\log(1+\gamma_a)$. Similarly, for
the same $\epsilon_0$, the maximum achievable constant SNR at PR-Rx,
$\gamma_p$, in the PIP case can be obtained from
$Pr\{(f_iQ)/(1+\Gamma)< \gamma_p \}=\epsilon_0$, and the
corresponding outage capacity, $C_{\rm PR, CP}^{{\rm
OUT},p}(\epsilon_0)$, is obtained as $\log(1+\gamma_p)$.

Instead of comparing $C_{\rm PR, CP}^{{\rm OUT},a}(\epsilon_0)$ and
$C_{\rm PR, CP}^{{\rm OUT},p}(\epsilon_0)$ directly, we consider the
following equivalent problem: Supposing that
$\gamma_a=\gamma_p=\gamma_0$, we compare the resultant minimum
outage probabilities, denoted as $\epsilon_a$ and $\epsilon_p$ in
the AIP and PIP cases, respectively. If $\epsilon_a\leq \epsilon_p$
for any given $\gamma_0$, we conclude that $C_{\rm PR, CP}^{{\rm
OUT},a}(\epsilon_0)\geq C_{\rm PR, CP}^{{\rm OUT},p}(\epsilon_0)$
for any $\epsilon_0$. This is true because if $\epsilon_a\leq
\epsilon_p$, we can increase $\gamma_a$ above $\gamma_0$ so that
$\epsilon_a$ increases until it becomes equal to $\epsilon_p$; since
$\gamma_a\geq \gamma_0\geq \gamma_p$, it follows that $C_{\rm PR,
CP}^{{\rm OUT},a}(\epsilon_p)\geq C_{\rm PR, CP}^{{\rm
OUT},p}(\epsilon_p)$. Similarly, if $\epsilon_a\geq\epsilon_p$ for
any given $\gamma_0$, we conclude that $C_{\rm PR, CP}^{{\rm
OUT},a}(\epsilon_0)\leq C_{\rm PR, CP}^{{\rm OUT},p}(\epsilon_0)$
for any $\epsilon_0$.

To compare $\epsilon_a$ and $\epsilon_p$ for the same given
$\gamma_0$, we first express $\epsilon_a$ as
\begin{eqnarray}
\epsilon_a&=&Pr\left\{\frac{f_iQ}{1+I_i}< \gamma_0 \right\} \\
&=& \mathbb{E}_I\left[\mathbb{E}_f\left[{\bf 1}
\left(\frac{f_iQ}{1+I_i}< \gamma_0\right)\right]\right] \\ &=&
\mathbb{E}_I\left[G_f\left(\frac{(1+I_i)\gamma_0}{Q}\right)\right]
\label{eq:outage prob AIP}
\end{eqnarray}
where $G_f(x)$ is the cumulative density function (CDF) for
$\mv{f}$, i.e., $G_f(x)=Pr\{\mv{f}<x\}$. Similarly, we can express
$\epsilon_p$ as
\begin{eqnarray}
\epsilon_p=G_f\left(\frac{(1+\Gamma)\gamma_0}{Q}\right).
\label{eq:outage prob PIP}
\end{eqnarray}
By Jensen's inequality, from (\ref{eq:outage prob AIP}) and
(\ref{eq:outage prob PIP}), it follows that $\epsilon_a\leq
\epsilon_p$ if $G_f(x)$ is a convex function. Similarly,
$\epsilon_a\geq \epsilon_p$ if $G_f(x)$ is a concave function. We
thus have the following theorem:
\begin{theorem}\label{theorem:CP Outage}
With the CP policy for the PR, $C^{{\rm OUT},a}_{\rm
PR,CP}(\epsilon_0)\geq C^{{\rm OUT},p}_{\rm PR,CP}(\epsilon_0),
\forall \epsilon_0$, under the same average and peak power threshold
$\Gamma$, if $G_f(x)$ is a convex function; and $C^{{\rm
OUT},a}_{\rm PR,CP}(\epsilon_0)\leq C^{{\rm OUT},p}_{\rm
PR,CP}(\epsilon_0), \forall \epsilon_0$, if $G_f(x)$ is a concave
function.
\end{theorem}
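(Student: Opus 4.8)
The plan is to bypass a direct comparison of the two outage capacities and instead use the equivalence, set up in the paragraphs preceding the theorem, between ordering $C^{{\rm OUT},a}_{\rm PR,CP}(\epsilon_0)$ and $C^{{\rm OUT},p}_{\rm PR,CP}(\epsilon_0)$ at a common outage level $\epsilon_0$ and ordering the outage probabilities $\epsilon_a$ and $\epsilon_p$ at a common target SNR $\gamma_0$. The first step I would carry out is to make this reduction precise. Fixing $\gamma_a=\gamma_p=\gamma_0$ and noting that in each case the outage probability $Pr\{{\rm SNR}<\gamma\}$ is nondecreasing in the target SNR $\gamma$, one sees that if $\epsilon_a\leq\epsilon_p$ for every $\gamma_0$ then, raising $\gamma_a$ from $\gamma_0$ until the AIP outage probability reaches $\epsilon_p$, the admissible AIP SNR satisfies $\gamma_a\geq\gamma_0=\gamma_p$ at outage level $\epsilon_p$, whence $C^{{\rm OUT},a}_{\rm PR,CP}(\epsilon_0)\geq C^{{\rm OUT},p}_{\rm PR,CP}(\epsilon_0)$ for all $\epsilon_0$; the reverse comparison follows symmetrically when $\epsilon_a\geq\epsilon_p$.

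The analytical core is then the comparison of $\epsilon_a$ and $\epsilon_p$ for a fixed $\gamma_0$. Using the independence of $f_i$ and $I_i$, I would condition on the interference and integrate over the channel $f_i$ to obtain the representations in (\ref{eq:outage prob AIP}) and (\ref{eq:outage prob PIP}), namely $\epsilon_a=\mathbb{E}_I\left[G_f\left(\frac{(1+I_i)\gamma_0}{Q}\right)\right]$ and $\epsilon_p=G_f\left(\frac{(1+\Gamma)\gamma_0}{Q}\right)$. The structural observation that drives the result is that the argument $(1+I_i)\gamma_0/Q$ is an increasing \emph{affine} function of the random interference $I_i$, so the map $I_i\mapsto G_f\left((1+I_i)\gamma_0/Q\right)$ inherits the curvature of $G_f$; moreover, since $\mathbb{E}[I_i]=\Gamma$, the mean of this affine argument equals $(1+\Gamma)\gamma_0/Q$, which is exactly the deterministic argument appearing in $\epsilon_p$. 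Jensen's inequality applied to this composition therefore relates $\epsilon_a=\mathbb{E}_I[\,\cdot\,]$ and $\epsilon_p=G_f(\mathbb{E}_I[\,\cdot\,])$, with the direction of the inequality fixed by the curvature of $G_f$: convexity of $G_f$ makes the randomized (AIP) interference favorable, i.e.\ $\epsilon_a\leq\epsilon_p$, and concavity reverses this, exactly as stated in the discussion preceding the theorem.

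Feeding each case of the Jensen comparison into the reduction argument then yields the two claims of the theorem. I expect the reduction step, rather than the Jensen step, to be the main obstacle: the Jensen application is routine once the affine structure of the argument is recognized, whereas the passage from a fixed-$\gamma_0$ ordering of outage probabilities to a fixed-$\epsilon_0$ ordering of outage capacities requires care with the monotonicity of outage probability in the target SNR and with carrying each inequality through in the correct direction. A further point worth flagging is that the hypothesis asks for $G_f$ to be convex (respectively concave) globally on the relevant range of arguments; for fading CDFs whose curvature changes sign, only a restricted or piecewise version of the conclusion would survive, so the theorem should be read as conditional on this global-curvature assumption.
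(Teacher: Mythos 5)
Your proposal is correct and follows essentially the same route as the paper: the reduction from comparing outage capacities at fixed $\epsilon_0$ to comparing outage probabilities at fixed $\gamma_0$, followed by conditioning on $I_i$ to write $\epsilon_a=\mathbb{E}_I\left[G_f\left(\frac{(1+I_i)\gamma_0}{Q}\right)\right]$ and $\epsilon_p=G_f\left(\frac{(1+\Gamma)\gamma_0}{Q}\right)$ and applying Jensen's inequality according to the curvature of $G_f$, is exactly the argument given in the text preceding the theorem. Your closing caveat about the curvature of $G_f$ needing to hold globally on the relevant range is a fair reading of the hypothesis but does not alter the proof.
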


Theorem \ref{theorem:CP Outage} suggests that for the CP policy,
whether the AIP or the PIP constraint results in a larger PR outage
capacity depends on the convexity/concavity of the CDF of the PR
fading channel power gain. As an example, for the standard Rayleigh
fading model, it is known that $G_f(x)$ has an exponential
distribution that is convex and, thus, $C^{{\rm OUT},a}_{\rm
PR,CP}(\epsilon_0)\geq C^{{\rm OUT},p}_{\rm PR,CP}(\epsilon_0)$.
However, in general, whether the interference diversity gain is
present depends on the PR fading channel distribution.

\subsubsection{Channel-Inversion Power Control}

Next, we consider the special case of the CR outage capacity with
zero outage probability, i.e., the delay-limited capacity, which is
achievable by the CI power-control policy. In the AIP case, the
optimal PR power allocation is expressed as
\begin{equation}\label{eq:CI AIP}
q_i^{{\rm CI},a}=\frac{\gamma_a(1+I_i)}{f_i}
\end{equation}
and in the PIP case expressed as
\begin{equation}\label{eq:CI PIP}
q_i^{{\rm CI},p}=\frac{\gamma_p(1+\Gamma)}{f_i}
\end{equation}
where $\gamma_a$ and $\gamma_p$ are the constant SNRs at PR-Rx for
the AIP and PIP cases, respectively. Given $\mathbb{E}[q_i]=Q$,
$\gamma_a$ and $\gamma_p$ can be obtained from (\ref{eq:CI AIP}) and
(\ref{eq:CI PIP}) as
\begin{equation}\label{eq:SINR AIP}
\gamma_a=\frac{Q}{\mathbb{E}\left[\frac{1+I_i}{f_i}\right]}
\end{equation}
and
\begin{equation}\label{eq:SINR PIP}
\gamma_p=\frac{Q}{(1+\Gamma)\mathbb{E}\left[\frac{1}{f_i}\right]},
\end{equation}
respectively. Since $f_i$ is independent of $I_i$, we have
\begin{equation}
\mathbb{E}\left[\frac{1+I_i}{f_i}\right]=\mathbb{E}_f\left[\frac{1+\mathbb{E}[I_i]}{f_i}\right]
=(1+\Gamma)\mathbb{E}\left[\frac{1}{f_i}\right] \nonumber
\end{equation}
and thus it follows from (\ref{eq:SINR AIP}) and (\ref{eq:SINR PIP})
that $\gamma_a=\gamma_p$. Hence, we conclude that the PR
delay-limited capacities, expressed as $C^{{\rm DL},a}_{\rm
PR}=\log(1+\gamma_a)$ and $C^{{\rm DL},p}_{\rm
PR}=\log(1+\gamma_p)$, for the AIP and PIP cases, respectively, are
indeed identical. The following theorem thus holds:
\begin{theorem}\label{theorem:CI}
With the CI power control for the PR, $C^{{\rm DL},a}_{\rm PR}=
C^{{\rm DL},p}_{\rm PR}$, under the same average and peak power
threshold $\Gamma$.
\end{theorem}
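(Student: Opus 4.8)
The plan is to reduce the comparison of the two delay-limited capacities to a comparison of the two constant received SNRs, $\gamma_a$ and $\gamma_p$, and then to show that these coincide. Because channel inversion pins the received SNR to a fixed value at every fading state, each delay-limited capacity is simply $\log(1+\gamma)$ for the appropriate $\gamma$; hence it suffices to prove $\gamma_a=\gamma_p$.

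First I would impose the average transmit-power constraint $\mathbb{E}[q_i]=Q$ on each CI allocation. Substituting (\ref{eq:CI AIP}) and pulling the constant $\gamma_a$ outside the expectation gives $\gamma_a=Q/\mathbb{E}[(1+I_i)/f_i]$, i.e. (\ref{eq:SINR AIP}); doing the same with (\ref{eq:CI PIP}), where the interference equals the deterministic value $\Gamma$, gives $\gamma_p=Q/\big((1+\Gamma)\mathbb{E}[1/f_i]\big)$, i.e. (\ref{eq:SINR PIP}). At this stage the two SNRs differ only through their denominators.

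The crux is to evaluate the AIP denominator $\mathbb{E}[(1+I_i)/f_i]$. Here I would invoke the independence of $f_i$ and $I_i$ (inherited from the independence of $f_i$ and $g_i$, since $I_i=g_ip_i$ and $p_i$ is a function of $\mv{h},\mv{g}$ only) together with the fact that the AIP constraint is active, $\mathbb{E}[I_i]=\Gamma$. Splitting $(1+I_i)/f_i=1/f_i+I_i/f_i$ and factoring the expectation of the product yields $\mathbb{E}[(1+I_i)/f_i]=(1+\Gamma)\mathbb{E}[1/f_i]$, which is exactly the PIP denominator. Hence $\gamma_a=\gamma_p$, and therefore $C^{{\rm DL},a}_{\rm PR}=C^{{\rm DL},p}_{\rm PR}$.

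The single step that does all the work is this factorization, and it is worth noting why no strict gain appears. In Theorems \ref{theorem:CP} and \ref{theorem:WF} the quantity governing the capacity was a strictly convex function of $I_i$, so Jensen's inequality delivered a genuine interference-diversity advantage to the AIP case. Under channel inversion, by contrast, the quantity entering the power constraint is \emph{affine} in $I_i$; Jensen's inequality then holds with equality and randomizing the interference power confers no benefit. I expect no real obstacle in the calculation itself — the subtlety is purely conceptual, namely recognizing that the delay-limited regime is exactly the borderline case in which the interference-diversity phenomenon vanishes.
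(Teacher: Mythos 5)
Your proposal is correct and follows essentially the same route as the paper: impose $\mathbb{E}[q_i]=Q$ to get $\gamma_a=Q/\mathbb{E}[(1+I_i)/f_i]$ and $\gamma_p=Q/\bigl((1+\Gamma)\mathbb{E}[1/f_i]\bigr)$, then use the independence of $f_i$ and $I_i$ together with $\mathbb{E}[I_i]=\Gamma$ to factor the AIP denominator into the PIP one. Your closing observation that the affine dependence on $I_i$ is precisely why Jensen's inequality degenerates to equality here is a nice articulation of what the paper only states informally after the theorem.
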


Theorem \ref{theorem:CI} suggests that for the CI power control, the
loss of the PR delay-limited capacity due to randomized interference
powers from CR-Tx is identical to that due to constant interference
powers, i.e., the AIP constraint is at least no worse than the PIP
from the PR's perspective of delivering zero-delay and constant-rate
data traffic.

\subsubsection{Truncated-Channel-Inversion Power Control}

Lastly, we consider the general outage capacity for the PR
achievable by the TCI power-control policy. In the AIP case, the
optimal TCI power control is expressed as
\begin{equation}\label{eq:TCI AIP}
q_i^{{\rm TCI},a}=\left\{\begin{array}{ll}
\frac{\gamma_a(1+I_i)}{f_i}, & \frac{f_i}{1+I_i}\geq \theta_a \\ 0,
& {\rm otherwise}\end{array} \right.
\end{equation}
where $\theta_a$ is the threshold for the effective channel power
gain above which CI power control is applied to achieve a constant
receiver SNR, $\gamma_a$, and below which no transmission is
implemented. Similarly, the TCI power control in the PIP case is
expressed as
\begin{equation}\label{eq:TCI PIP}
q_i^{{\rm TCI},p}=\left\{\begin{array}{ll}
\frac{\gamma_p(1+\Gamma)}{f_i}, & \frac{f_i}{1+\Gamma}\geq \theta_p \\
0, & {\rm otherwise}\end{array} \right.
\end{equation}
where $\theta_p$ is the threshold for power truncation. Given the
outage probability $\epsilon_0$, $\theta_a$ and $\theta_p$ can be
obtained from $Pr\{f_i/(1+I_i)< \theta_a\}=\epsilon_0$ and
$Pr\{f_i/(1+\Gamma)< \theta_p\}=\epsilon_0$, respectively. Then,
$\gamma_a$ and $\gamma_p$ can be obtained from $\mathbb{E}[q_i^{{\rm
TCI},a}]=Q$ and $\mathbb{E}[q_i^{{\rm TCI},p}]=Q$, respectively. The
corresponding outage capacities for the PR, denoted as $C^{{\rm
OUT},a}_{\rm PR,TCI}(\epsilon_0)$ and $C^{{\rm OUT},p}_{\rm
PR,TCI}(\epsilon_0)$ for the AIP and PIP cases can be obtained as
$\log(1+\gamma_a)$ and $\log(1+\gamma_p)$, respectively.

\begin{theorem}\label{theorem:TCI}
With the TCI power control for the PR, $C^{{\rm OUT},a}_{\rm
PR,TCI}(\epsilon_0)\geq C^{{\rm OUT},p}_{\rm PR,TCI}(\epsilon_0),
\forall \epsilon_0$, under the same average and peak power threshold
$\Gamma$.
\end{theorem}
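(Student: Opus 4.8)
The plan is to reduce the capacity comparison to a comparison of the constant receiver SNRs. Since $C^{{\rm OUT},a}_{\rm PR,TCI}(\epsilon_0)=\log(1+\gamma_a)$ and $C^{{\rm OUT},p}_{\rm PR,TCI}(\epsilon_0)=\log(1+\gamma_p)$ are both increasing functions of their arguments, it suffices to show $\gamma_a\geq\gamma_p$. Imposing $\mathbb{E}[q_i^{{\rm TCI},a}]=Q$ and $\mathbb{E}[q_i^{{\rm TCI},p}]=Q$ on (\ref{eq:TCI AIP}) and (\ref{eq:TCI PIP}), I would write $\gamma_a=Q/m_a$ and $\gamma_p=Q/m_p$, where $m_a$ and $m_p$ are the average channel-inversion ``costs'' incurred on the served (non-outage) fading states. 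The whole claim then collapses to the single inequality $m_a\leq m_p$.

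The key step is to view TCI through a variational, serving-set characterization. For a given $\epsilon_0$, the outage constraint forces the PR to serve a set $\mathcal{S}$ of fading states with $Pr\{\mathcal{S}\}=1-\epsilon_0$, and on $\mathcal{S}$ the effective channel is inverted to the target SNR. The resulting cost is $\mathbb{E}\left[\frac{1+I_i}{f_i}{\bf 1}(\mathcal{S})\right]$ in the AIP case and $\mathbb{E}\left[\frac{1+\Gamma}{f_i}{\bf 1}(\mathcal{S})\right]$ in the PIP case. Because the integrand is positive and we maximize $\gamma=Q/(\text{cost})$, the optimal $\mathcal{S}$ is the one minimizing the cost over all serving sets of measure $1-\epsilon_0$, i.e., it retains the states with the largest effective gain. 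This is exactly the threshold rule in (\ref{eq:TCI AIP})--(\ref{eq:TCI PIP}) with $\theta_a,\theta_p$ fixing the measure; hence $m_a$ and $m_p$ are the respective \emph{minimal} costs.

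The decisive trick is to bound the AIP minimum $m_a$ by evaluating the AIP cost on a deliberately suboptimal serving set, namely one depending on $f_i$ alone. Let $c=G_f^{-1}(\epsilon_0)$ and $\mathcal{S}_0=\{f_i\geq c\}$; since $\mathcal{S}_0$ depends on $f_i$ only, $Pr\{\mathcal{S}_0\}=1-\epsilon_0$ and $\mathcal{S}_0$ is admissible for the AIP minimization. Optimality of $m_a$ then gives
\begin{equation*}
m_a\leq \mathbb{E}\left[\frac{1+I_i}{f_i}{\bf 1}(f_i\geq c)\right]=(1+\mathbb{E}[I_i])\,\mathbb{E}\left[\frac{1}{f_i}{\bf 1}(f_i\geq c)\right]=(1+\Gamma)\,\mathbb{E}\left[\frac{1}{f_i}{\bf 1}(f_i\geq c)\right],
\end{equation*}
where the factorization uses independence of $f_i$ and $I_i$ and the last equality uses $\mathbb{E}[I_i]=\Gamma$. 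But $\mathcal{S}_0$ is precisely the cost-minimizing serving set in the PIP case, since the PIP cost density $\frac{1+\Gamma}{f_i}$ is monotone in $f_i$ and the PIP threshold satisfies $(1+\Gamma)\theta_p=G_f^{-1}(\epsilon_0)=c$. Thus the right-hand side equals $m_p$, giving $m_a\leq m_p$, i.e. $\gamma_a\geq\gamma_p$, and the theorem follows. The interference-diversity gain here is the extra freedom of the AIP serving set to adapt jointly to $f_i$ and $I_i$, which can only lower the inversion cost below the $f_i$-only choice that matches the PIP cost.

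I expect the main obstacle to be rigorously justifying the variational characterization, namely that TCI is genuinely the cost-minimizing policy among all serving sets of measure $1-\epsilon_0$ (a bathtub/Neyman--Pearson type argument, where continuity of the fading PDFs guarantees the measure can be met exactly), together with checking that $\mathcal{S}_0$ is admissible with the correct measure. Once these are in place, the comparison reduces to the one-line identity via independence and $\mathbb{E}[I_i]=\Gamma$, in the same spirit as the proof of Theorem \ref{theorem:CP}.
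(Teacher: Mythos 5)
Your proof is correct, but it takes a genuinely different route from the paper's. The paper fixes the target SNR $\gamma_0$, reduces the claim to comparing the minimum outage probabilities $\epsilon_a\leq\epsilon_p$, and establishes that via Lagrange duality: it writes each outage probability as a max-min problem, evaluates the PIP dual function at the suboptimal multiplier $\mu_a$ to lower-bound $\epsilon_p$, and then uses concavity in $\mathbb{E}[I_i]$ together with Jensen's inequality to descend to $\epsilon_a$. You instead fix $\epsilon_0$ and compare the SNRs directly through a primal, rearrangement-type argument: TCI is the cost-minimizing serving set of measure $1-\epsilon_0$ (a sublevel-set/bathtub fact, valid here since the continuous PDFs let the threshold hit the measure exactly), and the AIP minimum is bounded by the cost of the $f_i$-only test set $\mathcal{S}_0=\{f_i\geq G_f^{-1}(\epsilon_0)\}$, which by independence collapses \emph{exactly} to $m_p$. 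Both proofs are ``plug a suboptimal point into an optimal value'' arguments — the paper does it in the dual (wrong multiplier), you do it in the primal (wrong serving set) — but your choice of a serving set independent of $I_i$ makes the interference enter linearly, so the factorization $\mathbb{E}[(1+I_i)/f_i\,{\bf 1}(f_i\geq c)]=(1+\Gamma)\mathbb{E}[f_i^{-1}{\bf 1}(f_i\geq c)]$ is an identity and no Jensen step is needed. This buys a more elementary proof and makes the source of the interference-diversity gain transparent (the AIP serving set may adapt jointly to $f_i$ and $I_i$); the price is that you must rigorously establish the variational characterization of TCI, which you correctly flag and which is standard. Your argument also recovers Theorem \ref{theorem:CI} as the $\epsilon_0=0$ case, where $\mathcal{S}_0$ is everything and the inequality becomes an equality.
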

\begin{proof}
Similarly like the discussions for the PR outage capacity with the
CP policy, we compare $C^{{\rm OUT},a}_{\rm PR,TCI}(\epsilon_0)$ and
$C^{{\rm OUT},p}_{\rm PR,TCI}(\epsilon_0)$ via the following
equivalent problem: Given $\gamma_a=\gamma_p=\gamma_0$, we compare
the minimum outage probabilities in the AIP and PIP cases, denoted
as $\epsilon_a$ and $\epsilon_p$, respectively. If $\epsilon_a\leq
\epsilon_p, \forall \gamma_0$, we then conclude that $C^{{\rm
OUT},a}_{\rm PR,TCI}(\epsilon_0)\geq C^{{\rm OUT},p}_{\rm
PR,TCI}(\epsilon_0),\forall \epsilon_0$.

Next, we show that $\epsilon_a\leq \epsilon_p, \forall \gamma_0$.
Similarly like the proof of Theorem \ref{theorem:WF}, the Lagrange
duality is applied here. For given $Q$ and $\gamma_0$, $\epsilon_a$
and $\epsilon_p$ can be rewritten as the optimal values of the
following max-min optimization problems:
\begin{eqnarray}\label{eq:max min AIP}
\epsilon_a=\max_{\mu:\mu\geq 0}\min_{\{q_i\}:q_i\geq 0, \forall i}
Pr\left\{\frac{f_iq_i}{1+I_i}<\gamma_0\right\}
+\mu(\mathbb{E}[q_i]-Q)
\end{eqnarray}
and
\begin{eqnarray}\label{eq:max min PIP}
\epsilon_p=\max_{\mu:\mu\geq 0}\min_{\{q_i\}:q_i\geq 0, \forall i}
Pr\left\{\frac{f_iq_i}{1+\Gamma}<\gamma_0\right\}
+\mu(\mathbb{E}[q_i]-Q),
\end{eqnarray}
respectively. Note that $\mu_a=\theta_a/\gamma_0$ and $\{q_i^{{\rm
TCI}, a}\}$ are the optimal solutions to the ``max'' and ``min''
problems in (\ref{eq:max min AIP}), respectively, and
$\mu_p=\theta_p/\gamma_0$ and $\{q_i^{{\rm TCP}, p}\}$ are the
optimal solutions to the ``max'' and ``min'' problems in
(\ref{eq:max min PIP}), respectively. Then, we have the following
equalities/inequalities:
\begin{eqnarray}
\epsilon_p&=&\max_{\mu:\mu\geq 0} \mathbb{E}_f\left[{\bf
1}\left(\frac{f_i}{1+\Gamma}<\gamma_0\mu \right)
\right]+\mu\mathbb{E}_f\left[\frac{(1+\Gamma)\gamma_0}{f_i}{\bf1}\left(\frac{f_i}{1+\Gamma}\geq\gamma_0\mu
\right)\right]-\mu Q \label{eq:TCI 1}
\\ &\geq& \mathbb{E}_f\left[{\bf
1}\left(\frac{f_i}{1+\Gamma}<\gamma_0\mu_a \right)
\right]+\mu_a\mathbb{E}_f\left[\frac{(1+\Gamma)\gamma_0}{f_i}{\bf1}\left(\frac{f_i}{1+\Gamma}\geq\gamma_0\mu_a
\right)\right]-\mu_aQ \label{eq:TCI 2}
\\ &=&
1+\mathbb{E}_f\left[\left(\frac{(1+\Gamma)\gamma_0\mu_a}{f_i}-1\right){\bf1}\left(\frac{f_i}{1+\Gamma}\geq\gamma_0\mu_a
\right)\right]-\mu_a Q \label{eq:TCI 3}
\\ &=&
1+\mathbb{E}_f\left[\left(\frac{(1+\mathbb{E}[I_i])\gamma_0\mu_a}{f_i}-1\right){\bf1}\left(\frac{f_i}{1+\mathbb{E}[I_i]}\geq\gamma_0\mu_a
\right)\right]-\mu_a Q \label{eq:TCI 4}
\\ &\geq& 1+\mathbb{E}_f\mathbb{E}_I\left[\left(\frac{(1+I_i)\gamma_0\mu_a}{f_i}-1\right){\bf1}\left(\frac{f_i}{1+I_i}\geq\gamma_0\mu_a
\right)\right]-\mu_a Q \label{eq:TCI 5}
\\ &=& \epsilon_a \label{eq:TCI 6}
\end{eqnarray}
where (\ref{eq:TCI 1}) is obtained by substituting $\{q_i^{{\rm
TCI}, p}\}$ in (\ref{eq:TCI PIP}) with $\theta_p$ replaced by
$\gamma_0\mu$ into (\ref{eq:max min PIP}); (\ref{eq:TCI 2}) is due
to the fact that $\mu_a$ is not the maximizer $\mu_p$ for
(\ref{eq:TCI 1}); (\ref{eq:TCI 4}) is due to
$\mathbb{E}[I_i]=\Gamma$; (\ref{eq:TCI 5}) is due to concavity of
the function in $\mathbb{E}_f[\cdot]$ of (\ref{eq:TCI 4}) with
respect to $\mathbb{E}[I_i]$ for any given $f_i$ and Jensen's
inequality; (\ref{eq:TCI 6}) is due to the fact that $\mu_a$ and
$\{q_i^{{\rm TCI}, a}\}$ in (\ref{eq:TCI AIP}) are the optimal
solutions to the the max-min optimization problem in (\ref{eq:max
min AIP}).
\end{proof}

Theorem \ref{theorem:TCI} suggests that for the TCI power control of
the PR, the interference diversity gain due to the AIP constraint
over the PIP exists regardless of the outage probability. Note that
in Theorem \ref{theorem:CI} for the extreme case of zero outage
probability, it has been shown that the delay-limited capacities are
the same for both the AIP and PIP constraints.

\section{Simulation Results and Discussions}\label{sec:simulation
results}

So far, we have studied the effect of the AIP and PIP constraints on
the ergodic/outage capacity of the CR link and the PR link
separately. In this section, we will consider a realistic spectrum
sharing scenario over the fading channels, and evaluate by
simulation the jointly achievable ergodic/outage capacities for both
the PR and CR links. In total, we will consider four cases of
different combinations, which are CR ergodic capacity versus PR
ergodic capacity, CR ergodic capacity versus PR outage capacity, CR
outage capacity versus PR ergodic capacity, and CR outage capacity
versus PR outage capacity, in Figs. \ref{fig:ER ER}-\ref{fig:OUT
OUT}, respectively. It is assumed that $\Gamma_a=\Gamma_p=1$, the
same as the additive Gaussian noise power at PR-Rx and CR-Rx. It is
also assumed that $\mv{h}$, $\mv{g}$, and $\mv{f}$ are obtained from
the Rayleigh fading model, i.e., they are the squared norms of
independent CSCG r.v.s $\sim$ $\mathcal{CN}(0,1)$,
$\mathcal{CN}(0,10)$, and $\mathcal{CN}(0,1)$, respectively. Note
that we have purposely set the average power for $\mv{g}$ to be 10
dB larger than that for $\mv{h}$ or $\mv{f}$ so as to pronounce the
effect of the interference channel from CR-Tx to PR-Rx on the
achievable capacities. The PR transmit power constraint is set to be
$Q=10$. In the cases of outage capacities of the PR and/or CR, the
outage probability targets of $\epsilon_0$ for PR and CR are set to
be 0.2 and 0.1, respectively. In each figure, the PR and CR
capacities in bits/complex dimension (dim.) are plotted versus the
additional channel power gain attenuation of $\mv{g}$ in dB. For
example, for 0-dB attenuation, $\mathbb{E}[g_i]=10$; for 10-dB
attenuation, $\mathbb{E}[g_i]=1$.

In Figs. \ref{fig:ER ER} and \ref{fig:ER OUT}, we compare the CR
ergodic capacity under AIP or PIP constraint with the corresponding
ergodic and outage capacities for the PR, respectively. Note that
the CR ergodic capacities shown in these two figures are the same.
With increasing channel attenuation of $\mv{g}$, it is observed that
the CR ergodic capacity increases for both AIP and PIP cases. This
is obvious since given the fixed peak or average interference-power
threshold at PR-Rx, decreasing of the average power for $\mv{g}$
results in increasing of the average transmit power of the CR. It is
also observed that the AIP-based optimal power control performs
better than the PIP-based one for the CR, since the former is more
flexible for exploiting all the available CSI at CR-Tx.
Interestingly, as the average power for $\mv{g}$ decreases,
eventually the CR ergodic capacities in the AIP and PIP cases
converge to the same value. This can be explained as follows. From
(\ref{eq:power AIP ER}) and (\ref{eq:power PIP}), it follows that in
the AIP case, the interference power at PR-Rx, $I_i$, is randomized
over $i$ (but with $\mathbb{E}[I_i]=\Gamma$), while in the PIP case,
$I_i$ is constantly equal to $\Gamma$ for each $i$. Note that the
above fact leads to the interference diversity gain of the AIP over
the PIP for the PR transmission. However, with $g_i\rightarrow 0$,
it can be shown in the AIP case that $1/\nu\rightarrow \Gamma$ and
$I_i\rightarrow \Gamma$, which implies that $p_i^{{\rm
ER},a}=p_i^{{\rm ER},p}=\Gamma/g_i, \forall i$, and thus the same CR
ergodic capacity is resultant for both the AIP and PIP cases.

On the other hand, it is observed that the ergodic and outage
capacities for the PR under the AIP from CR-Tx are larger than the
corresponding ones under the PIP for various PR power-control
policies, which are in accord with the analytical results obtained
in Section \ref{sec:primary user capacity}. Note that the PR
ergodic/outage capacities in the PIP case are fixed regardless of
the channel power for $\mv{g}$, since $I_i$ is fixed as $\Gamma$ at
PR-Rx for each $i$. However, the PR ergodic/outage capacities in the
AIP case are observed to decrease with increasing of the channel
attenuation of $\mv{g}$. This is due to the fact that, as explained
earlier, $I_i\rightarrow \Gamma$ as $g_i\rightarrow 0$. Since the
capacity gain of the AIP over the PIP is due to the randomness of
$I_i$ over $i$ , this interference diversity gain diminishes as
$I_i\rightarrow \Gamma, \forall i$.

In Figs. \ref{fig:OUT ER} and \ref{fig:OUT OUT}, we compare the CR
outage capacity under AIP or PIP constraint with the corresponding
ergodic and outage capacities for the PR, respectively. Note that
the CR outage capacities shown in these two figures are identical.
With increasing channel attenuation of $\mv{g}$, it is observed
that, as expected, the CR outage capacity increases for both AIP and
PIP cases. It is also observed that the AIP-based optimal power
control results in substantial outage capacity gains than the
PIP-based one for the CR. It can be shown that as the average power
for $\mv{g}$ decreases, eventually the CR outage capacity gaps
between the AIP and the PIP cases converge to
$\log(\zeta_a/\zeta_p)$ for a given $\epsilon_0$. The proof is given
as follows. Suppose that $g'_i=\kappa g_i, \forall i$, where
$\kappa$ is a positive constant; we thus have
$\mathbb{E}[g'_i]=\kappa\mathbb{E}[g_i]$. For a given $\epsilon_0$,
it then follows that the new value of threshold in (\ref{eq:power
AIP OUT}) becomes $\lambda'=\lambda/\kappa$. From (\ref{eq:power AIP
OUT}) and under the same value of $\Gamma_a$, we have
$\zeta'_a=\zeta_a/\kappa$. Thus, the outage capacity corresponding
to $\mv{g}'$ in the AIP case is expressed as
$\log(1+\zeta_a/\kappa)$. Similarly, we can show that in the PIP
case, the new value of $\zeta_p$ corresponding to $\mv{g}'$ is
$\zeta'_p=\zeta_p/\kappa$ and thus the corresponding outage capacity
becomes $\log(1+\zeta_p/\kappa)$. Thus, the outage capacity gap
between the AIP and PIP cases is equal to
$\log(\frac{1+\zeta_a/\kappa}{1+\zeta_p/\kappa})$. As
$\kappa\rightarrow 0$, we conclude that the above capacity gap
converges to $\log(\zeta_a/\zeta_p)$. Note that in this simulation
with $\epsilon_0=0.1$ for the CR, $\log(\zeta_a/\zeta_p)=2.6791$
bits/complex dimension.

Furthermore, it is observed that the ergodic and outage capacities
for the PR under the AIP from CR-Tx are also larger than the
corresponding ones under the PIP for various PR power-control
policies, as have been analytically shown in Section
\ref{sec:primary user capacity}. Note that not only the PR
ergodic/outage capacities in the PIP case are fixed for all the
average powers for $\mv{g}$ due to that $I_i$ is fixed as $\Gamma$
for each $i$, but also are these capacities in the AIP case. The
latter observation can be explained by noting from the earlier proof
that for any channel power gains $g'_i$, $g'_i=\kappa g_i, \forall
i$, the resultant interference power at PR-Rx, $I'_i$, can be shown
to have the same distribution as $I_i$; as a result, the PR
capacities are constant regardless of $\kappa$.

\section{Concluding Remarks} \label{sec:conclusion}

This paper studies the information-theoretic limits for wireless
spectrum sharing in the PR-CR network where the CR applies the
interference-power/interference-temperature constraint at the PR
receiver as a practical means to protect the PR transmission. On the
contrary to the traditional viewpoint that the
peak-interference-power (PIP) constraint protects better the PR
transmission than the average-interference-power (AIP) constraint
given their same power-threshold value, this paper shows that the
AIP constraint can be in many cases more advantageous over the PIP
for minimizing the resultant capacity losses of the PR fading
channel. This is mainly owing to an interesting interference
diversity phenomenon discovered in this paper. This paper thus
provides an important design rule for the CR networks in practice,
i.e., the AIP constraint should be used for the purposes of both
protecting the PR transmission as well as maximizing the CR
throughput.

This paper assumes that the perfect CSI on the interference channel
from the CR transmitter to the PR receiver is available at the CR
transmitter for each fading state. In practice, it is usually more
valid to assume availability of only the statistical channel
knowledge. The definition of the AIP constraint in this paper can be
extendible to such cases. Furthermore, this paper considers the
fading PR and CR channels, but more generally, the results obtained
also apply to other channel models consisting of parallel Gaussian
channels over which the average and peak power constraints are
applicable, e.g., the time-dispersive broadband channel that is
decomposable into parallel narrow-band channels by the well-known
orthogonal-frequency-division-multiplexing (OFDM)
modulation/demodulation.

\newpage

\begin{figure}
\centering{
 \epsfxsize=4.6in
    \leavevmode{\epsfbox{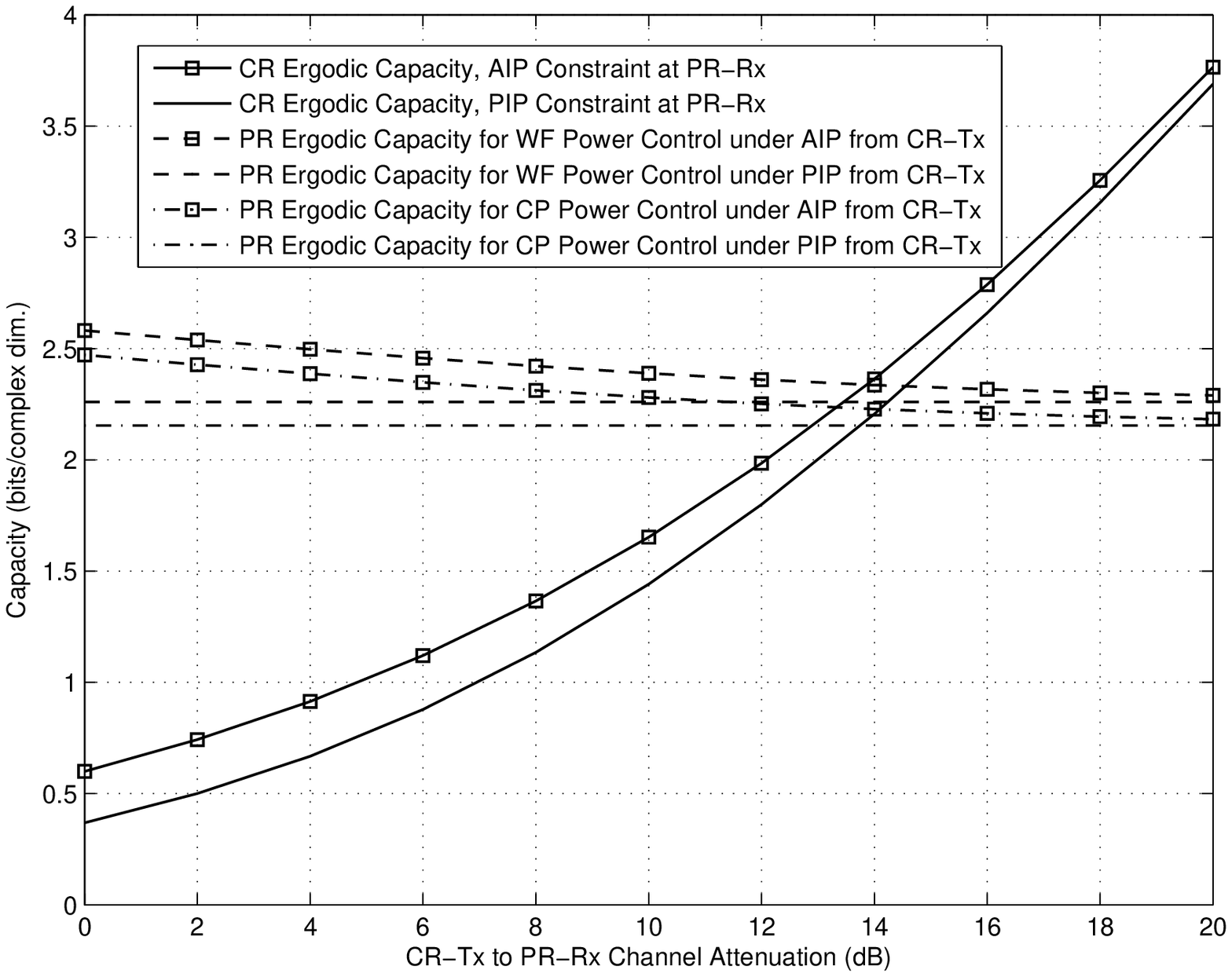}} }
\caption{Jointly achievable CR ergodic capacity and PR ergodic
capacity.}\label{fig:ER ER}
\end{figure}

\begin{figure}
\centering{
 \epsfxsize=4.6in
    \leavevmode{\epsfbox{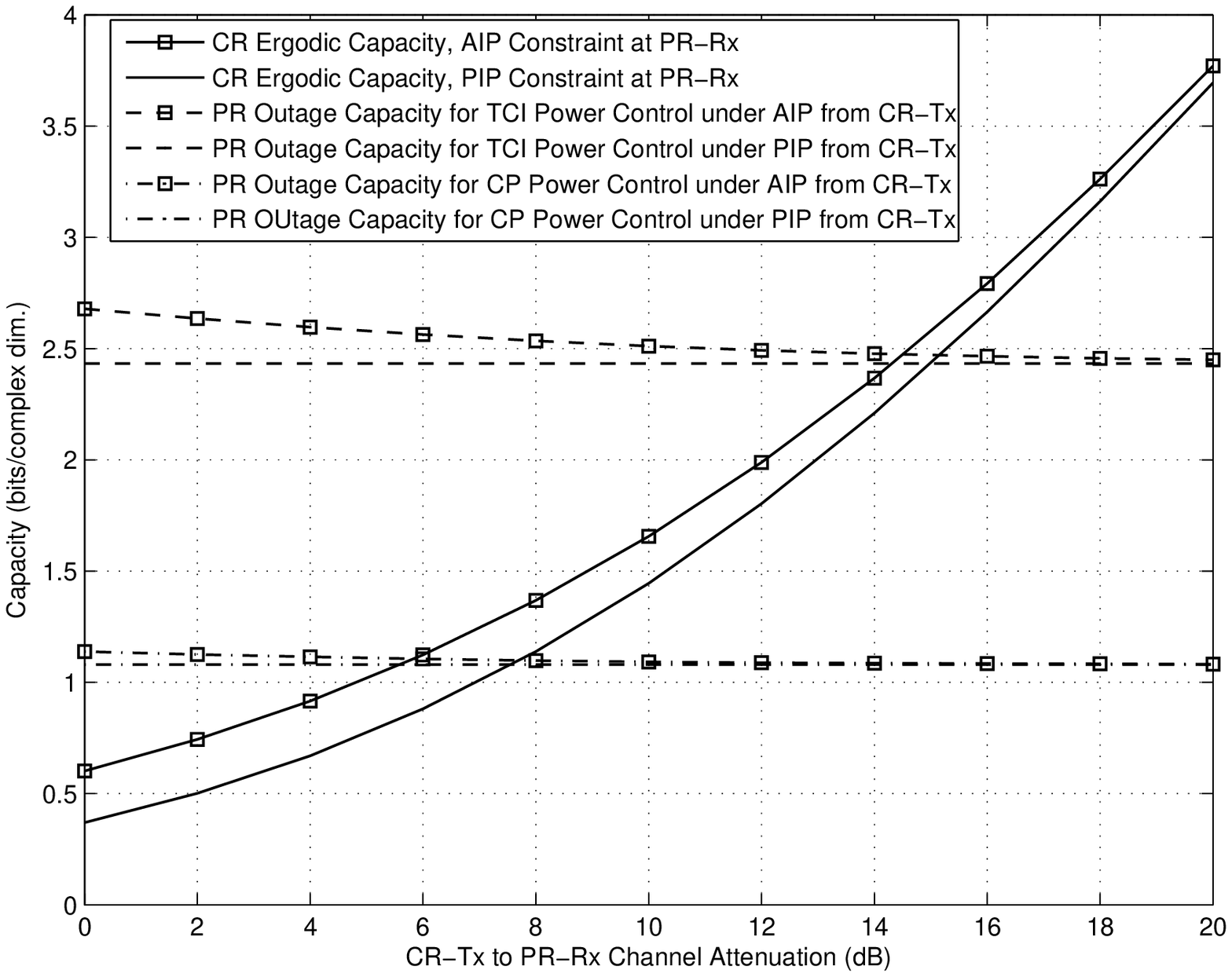}} }
\caption{Jointly achievable CR ergodic capacity and PR outage
capacity.}\label{fig:ER OUT}
\end{figure}

\begin{figure}
\centering{
 \epsfxsize=4.6in
    \leavevmode{\epsfbox{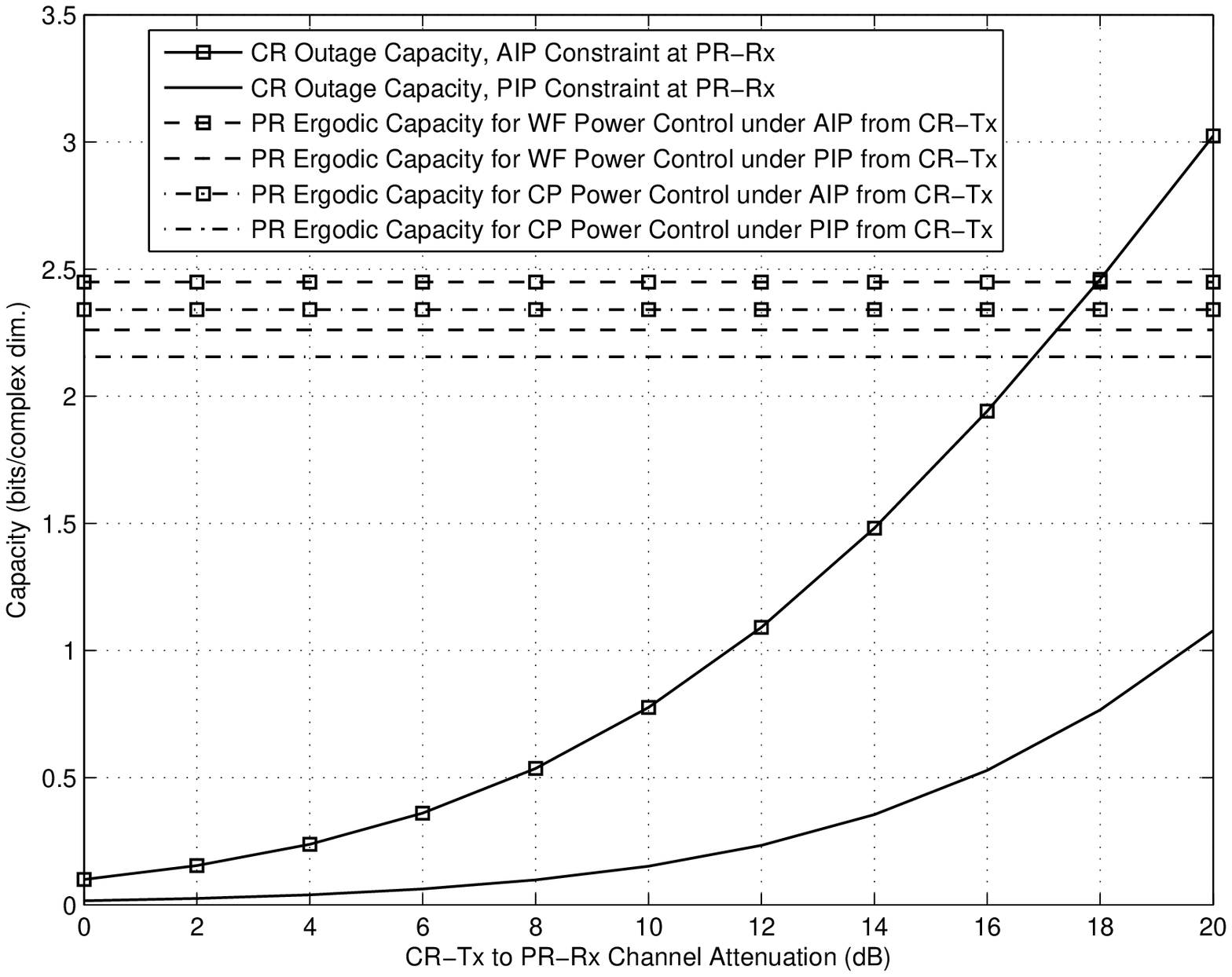}} }
\caption{Jointly achievable CR outage capacity and PR ergodic
capacity.}\label{fig:OUT ER}
\end{figure}

\begin{figure}
\centering{
 \epsfxsize=4.6in
    \leavevmode{\epsfbox{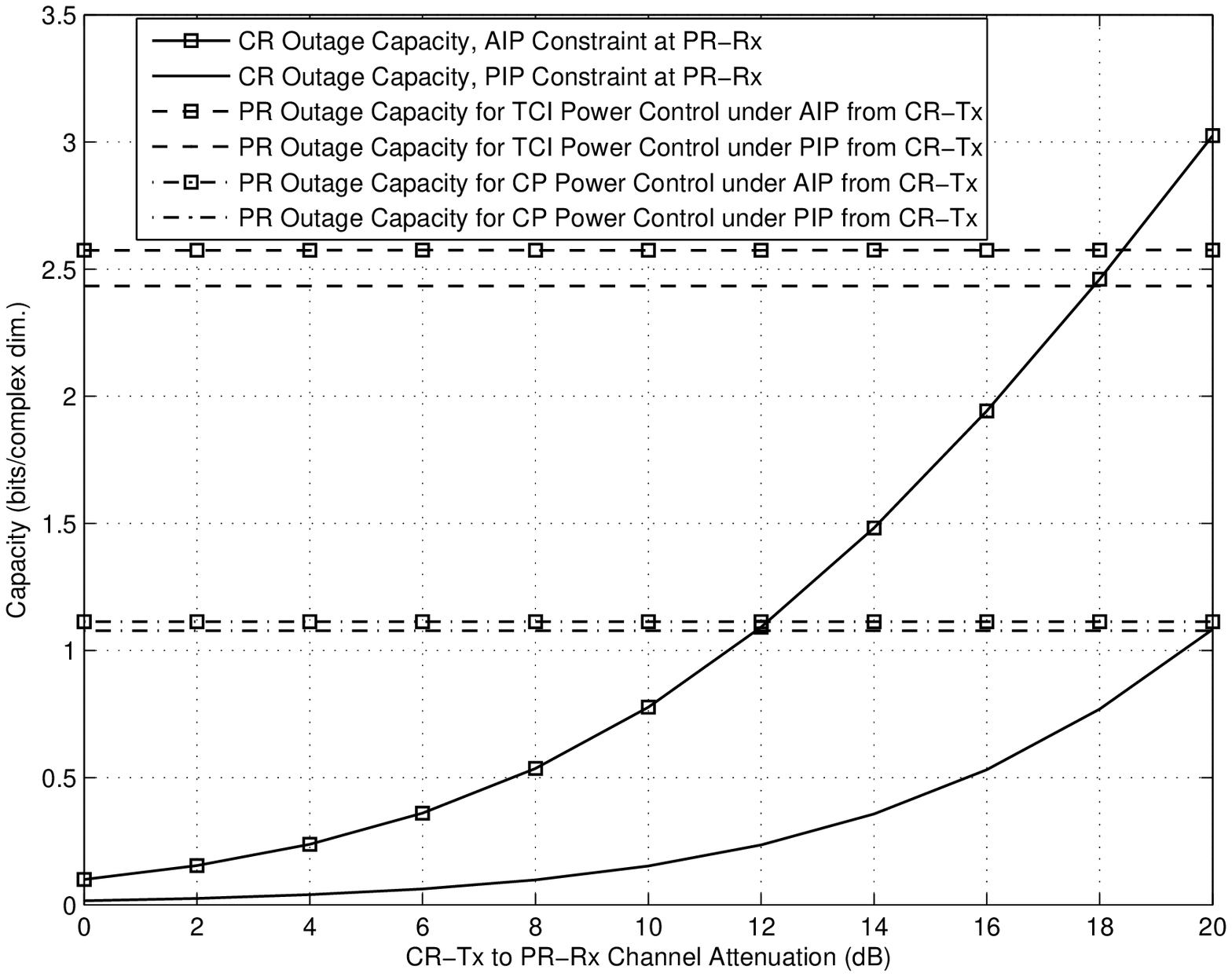}} }
\caption{Jointly achievable CR outage capacity and PR outage
capacity.}\label{fig:OUT OUT}
\end{figure}


\begin{thebibliography}{1}
\bibliographystyle{IEEEbib}

\bibitem{Mitola00} Joseph Mitola, ``Cognitive radio: an integrated agent
architecture for software defined radio,'' PhD Dissertation, KTH,
Stockholm, Sweden, Dec. 2000.

\bibitem{Tarokh06} N. Devroye, P. Mitran, and V. Tarokh, ``Achievable
rates in cognitive radio channels,'' {\it IEEE Trans. Inf. Theory},
vol. 52, no. 5, pp. 1813-1827, May 2006.

\bibitem{Viswanath06} A. Jovi\v{c}i\'{c} and P. Viswanath, ``Cognitive radio: An
information-theoretic perspective,'' {\it Proc. IEEE Int. Symp. Inf.
Theory (ISIT)}, Jul. 2006.

\bibitem{Haykin05} S. Haykin, ``Cognitive radio: brain-empowered wireless
communications,'' {\it IEEE J. Sel. Areas Commun.}, vol.~23, no.~2,
pp. 201-220, Feb. 2005.

\bibitem{Gastpar07} M. Gastpar, ``On capacity under receive and spatial spectrum-sharing
constraints,'' {\it IEEE Trans. Inf. Theory}, vol.~53, no.~2, pp.
471-487, Feb. 2007.

\bibitem{Ghasemi07} A. Ghasemi and E. S. Sousa, ``Fundamental
limits of spectrum-sharing in fading environments,'' {\it IEEE
Trans. Wireless Commun.}, vol. 6, no. 2, pp. 649-658, Feb. 2007.

\bibitem{Aissa07} L. Musavian and S. Aissa, ``Ergodic and outage capacities of
spectrum-sharing systems in fading channels,'' {\it in Proc. IEEE
Global Commun. Conf. (Globecom)}, Nov. 2007.

\bibitem{Kang08} X. Kang, Y. C. Liang, N. Arumugam, H. Garg, and R. Zhang ,``Optimal power allocation for
fading channels in cognitive radio networks: ergodic capacity and
outage capacity,'' {\it to appear in IEEE Trans. Wireless Commun.},
also available at arXiv:0808.3689.

\bibitem{Zhang08a} R. Zhang, ``Optimal power control over fading cognitive radio
channels by exploiting primary user CSI'', {\it to appear in IEEE
Global Commun. Conf. (Globecom)}, also available at arXiv:0804.1617.

\bibitem{Chen07} Y. Chen   G. Yu, Z. Zhang, H. H. Chen, and P. Qiu, ``On cognitive
radio networks with opportunistic power control strategies in fading
channels,'' {\it IEEE. Trans. Wireless Commun.}, vol. 7, no. 7, pp.
2752-2761, Jul. 2008.

\bibitem{Zhang08b} R. Zhang and Y. C. Liang, ``Exploiting multi-antennas for
opportunistic spectrum sharing in cognitive radio networks,'' {\it
IEEE J. S. Topics Sig. Proc.}, vol. 2, no. 1, pp. 88-102, Feb. 2008.

\bibitem{ZhangLan08a} L. Zhang, Y. C. Liang, and Y. Xin, ``Joint beamforming and power
control for multiple access channels in cognitive radio networks,''
{\it IEEE J. Sel. Areas Commun.}, vol.26, no.1, pp.38-51, Jan. 2008.

\bibitem{Zhang08MAC} R. Zhang, S. Cui, and Y. C. Liang, ``On ergodic
sum capacity of fading cognitive multiple-access and broadcast
channels,'' {\it submitted to IEEE Trans. Inf. Theory}, also
available at arXiv:0806.4468.

\bibitem{Coverbook} T. Cover and J. Thomas, {\it Elements of Information Theory,} New York: Wiley, 1991.

\bibitem{Goldsmith97} A. Goldsmith and P. P. Varaiya, ``Capacity of fading channels with channel side information,''
{\it IEEE Trans. Inf. Theory}, vol. 43, no. 6, pp. 1986-1992, Nov.
1997.

\bibitem{Shamai} L. H. Ozarow, S. Shamai, and A. D. Wyner, ``Information theoretic considerations for cellular
mobile radio,'' {\it IEEE Trans. Veh. Technol.}, vol. 43 no. 2, pp.
359-378, 1994.

\bibitem{Caire} G. Caire, G. Taricco, and E. Biglieri, ``Optimal power control over fading channels,''
{\it IEEE Trans. Inf. Theory}, vol. 45, no. 5, pp. 1468-1489, Jul.
1999.

\bibitem{Tse98} S. Hanly and D. Tse,``Multi-access fading channels-Part II: Delay-limited capacities,''
{\it IEEE Trans. Inf. Theory}, vol. 44, no.7, pp. 2816-2831, Nov.
1998.

\bibitem{Boydbook} S. Boyd and L. Vandenberghe, {\it Convex optimization,} Cambridge University Press, 2004.



\end{thebibliography}
\end{document}